\newtheorem{theorem}{Theorem}
\newtheorem{assumption}{Assumption}
\begin{document}
%
% paper title
% Titles are generally capitalized except for words such as a, an, and, as,
% at, but, by, for, in, nor, of, on, or, the, to and up, which are usually
% not capitalized unless they are the first or last word of the title.
% Linebreaks \\ can be used within to get better formatting as desired.
% Do not put math or special symbols in the title.
\title{\LARGE \bf Collaborative Platooning of Automated Vehicles\\ Using Variable Time-Gaps}

\author{Aria HasanzadeZonuzy$^*$,
Sina Arefizadeh$^\dagger$,
Alireza Talebpour$^\dagger$, 
Srinivas Shakkottai$^*$ and
Swaroop Darbha$^\ddagger$\\
$^*$Dept. of Electrical and Computer Engineering, Texas A\&M University\\
$^\dagger$Dept. of Civil Engineering, Texas A\&M University\\
$^\ddagger$Dept. of Mechanical Engineering, Texas A\&M University\\
{\small Email:\{azonuzy, sinaarefizadeh, atalebpour, sshakkot, dswaroop\}@tamu.edu}}

%\IEEEauthorblockA{\IEEEauthorrefmark{4}Tyrell Inc., 123 Replicant Street, Los Angeles, California 90210--4321}}

\pdfminorversion=4

% use for special paper notices
%\IEEEspecialpapernotice{(Invited Paper)}

% make the title area
\maketitle

% As a general rule, do not put math, special symbols or citations
% in the abstract

\begin{abstract}
Connected automated vehicles (CAVs) could potentially be coordinated to safely attain the maximum traffic flow on roadways under dynamic traffic patterns, such as those engendered by the merger of two strings of vehicles due a lane drop.  Strings of vehicles have to be shaped correctly in terms of the inter-vehicular time-gap and velocity to ensure that such operation is feasible.  However, controllers that can achieve such traffic shaping over the multiple dimensions of target time-gap and velocity over a region of space are unknown.   The objective of this work is to design such a controller, and to show that we can design candidate time-gap and velocity profiles such that it can stabilize the string of vehicles in attaining the target profiles.  Our analysis is based on studying the system in the spacial rather than the time domain, which enables us to study stability as in terms of minimizing errors from the target profile and across vehicles as a function of location.  Finally, we conduct numeral simulations in the context of shaping two platoons for merger, which we use to illustrate how to select time-gap and velocity profiles for maximizing flow and maintaining safety.
\end{abstract}

% no keywords

% For peer review papers, you can put extra information on the cover
% page as needed:
% \ifCLASSOPTIONpeerreview
% \begin{center} \bfseries EDICS Category: 3-BBND \end{center}
% \fi
%
% For peerreview papers, this IEEEtran command inserts a page break and
% creates the second title. It will be ignored for other modes.

%\IEEEpeerreviewmaketitle

\section{Introduction}
\label{sec:intro}

% Traffic shaping in terms of achieving desired time-gaps and velocities over strings of human driven vehicles is highly complex due to the non-deterministic process of driver decision making.  For example, the merger of two platoons of cars due to a lane drop on a highway often times results in shockwave formation and breakdown, even when some level of communication via realtime speed limits and ramp metering are available.  Connected automated vehicles (CAVs) have the potential to maintain maximum traffic flow on roadways under such dynamically changing traffic patterns. However, even in the CAV setting, significant coordination is needed to ensure that traffic is shaped in a manner that allows the seamless merger of the vehicle strings.

% \section{Motivation}
% \label{sec:motiv}

Traffic shaping in terms of achieving desired time-gaps and velocities over platoons of vehicles is needed to handle variable traffic flows on highways caused by mergers at highway entrances, departures at exits or prevailing road conditions such as a lane drop.  If uncontrolled, such events could lead to shockwave formation and breakdown.  Connected automated vehicles (CAVs) have the potential to maintain maximum traffic flow on roadways under such dynamically changing traffic patterns.  However, even in the CAV setting, significant coordination is needed to ensure that traffic is shaped in a manner that allows the seamless merger of the vehicle platoons in a safe manner.

% As we will see below, a change in vehicular flow rate must be accompanied by a change in the velocity of vehicles to maintain safe operation of the system.  Since the reciprocal of the flow is the time-gap between vehicles, a controller that aims at stabilizing a variable flow must account for both changes in the desired time-gap as well as the velocity.

% Changing time-gap and velocity in a desired fashion leads to change in shape of traffic. Traffic shaping is essential to create gaps between vehicles in order to facilitate merging process. On the other side, we aim to maximize traffic flow with highest velocity along the road. We will find maximum traffic flow.

\subsection{Safe Traffic Flow}

Consider the scenario of a platoon of vehicles in which the length of each vehicle is $l$ units, and the maximum deceleration possible by a vehicle is $a_{min}.$  As illustrated in Figure~\ref{fig:dis}, a safe operating point of this platoon would be to ensure that for any vehicle $i,$ if the vehicle ahead of it, $i-1$ were to stop instantaneously, then vehicle $i$ must be able to come to a stop without hitting vehicle $i-1.$  This is a conservative approach, but provides a model under which collision-free operation can be deterministically guaranteed as long as the safety criterion is not violated.
\begin{figure}
\begin{center}
\includegraphics[width=0.4\textwidth]{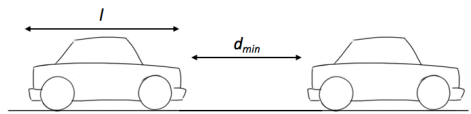}
\caption{The Eulerian view of two successive vehicles.}
\label{fig:dis}
\vspace{-.2in}
\end{center}
\end{figure}

Since the events of interest to us such as traffic merger requires that platoons be shaped at a particular location in space, we use the so-called \emph{Eulerian viewpoint} of the platoon in which we specify the state by observing the system from a fixed location in space $s,$ (as opposed to the \emph{Lagrangian} viewpoint which does so by moving with the vehicle of interest and specifying state as a function of time $t$.)   Thus, we use a fixed Cartesian coordinate system to indicate the position of each vehicle at each time.  Assume that all vehicles are traveling at velocity $v,$ and that vehicle $i-1$ has just passed location $s.$  Consider the scenario in which the distance between the vehicles $i-1$ and $i$ is $d.$  The \emph{time-gap} between the two vehicles, denoted as $\tau$ is the time required for vehicle $i$ to just pass location $s.$  The distance traveled by the vehicle $i$ during this time interval is $d + l.$ 

Determining the safe operating points in the system requires a kinematic model of the vehicles. For simplicity of exposition we use a so-called second order model of vehicular dynamics under which each vehicle can directly control its acceleration (within limits). Such a model does not account for the lag between the application of a control input and the desired acceleration actually being realized. However, it is relatively straightforward to extend the results that we present in this paper to a third order model that incorporates lag. Furthermore, since lag is a random variable that depends on individual vehicle dynamics and roadway conditions, the best option is often to develop controllers for the second order model, and then study their performance using a third order model (with random lags) numerically.

Let $d_{min}$ be the smallest distance between the two vehicles such that we satisfy our safety requirement at velocity $v.$  Then denoting the absolute value of the maximum possible deceleration of the vehicle by $a_{min}$, the minimum distance for safe operation is
\begin{equation}
d_{min} = \frac{v^2}{2a_{min}}.
\label{eq:dmin}
\end{equation}

Assuming vehicle $i$ has constant velocity $v$, the kinematic relation between $v,$ $\tau$ and $d_{min}$ is
$$v \tau = d_{min} + l,$$
where $\tau$ is minimum time-gap required for safety.  Then from (\ref{eq:dmin}) we have
% $$v \tau = \frac{v^2}{2a_{min}} + l,$$
% which yields
the relation between time-gap and velocity for safe operation as
\begin{equation}
\tau = \frac{v}{2a_{min}} + \frac{l}{v}.
\label{eq:tv}
\end{equation}

\subsection{Need for Traffic Shaping}
Figure~\ref{fig:tv} provides an illustration of the relationship derived above. The length of each vehicle, $l$ is chosen as $6$ m (and conservatively represents both the actual vehicle length plus the standstill spacing desired) and $a_{min}$ is chosen to be $4$ m/s$^2$ in this example.  The x-axis represents the velocity $v$. The y-axis represents minimum time-gap $\tau$ achievable for each velocity. There is no control law that can safely operate (guarantee no collisions) outside of this convex region, i.e. the safe region is above the curve.  However, there could be a control law that guarantees safe operation for points inside the safety region.
\begin{figure}
\begin{center}
\includegraphics[width=0.4\textwidth]{./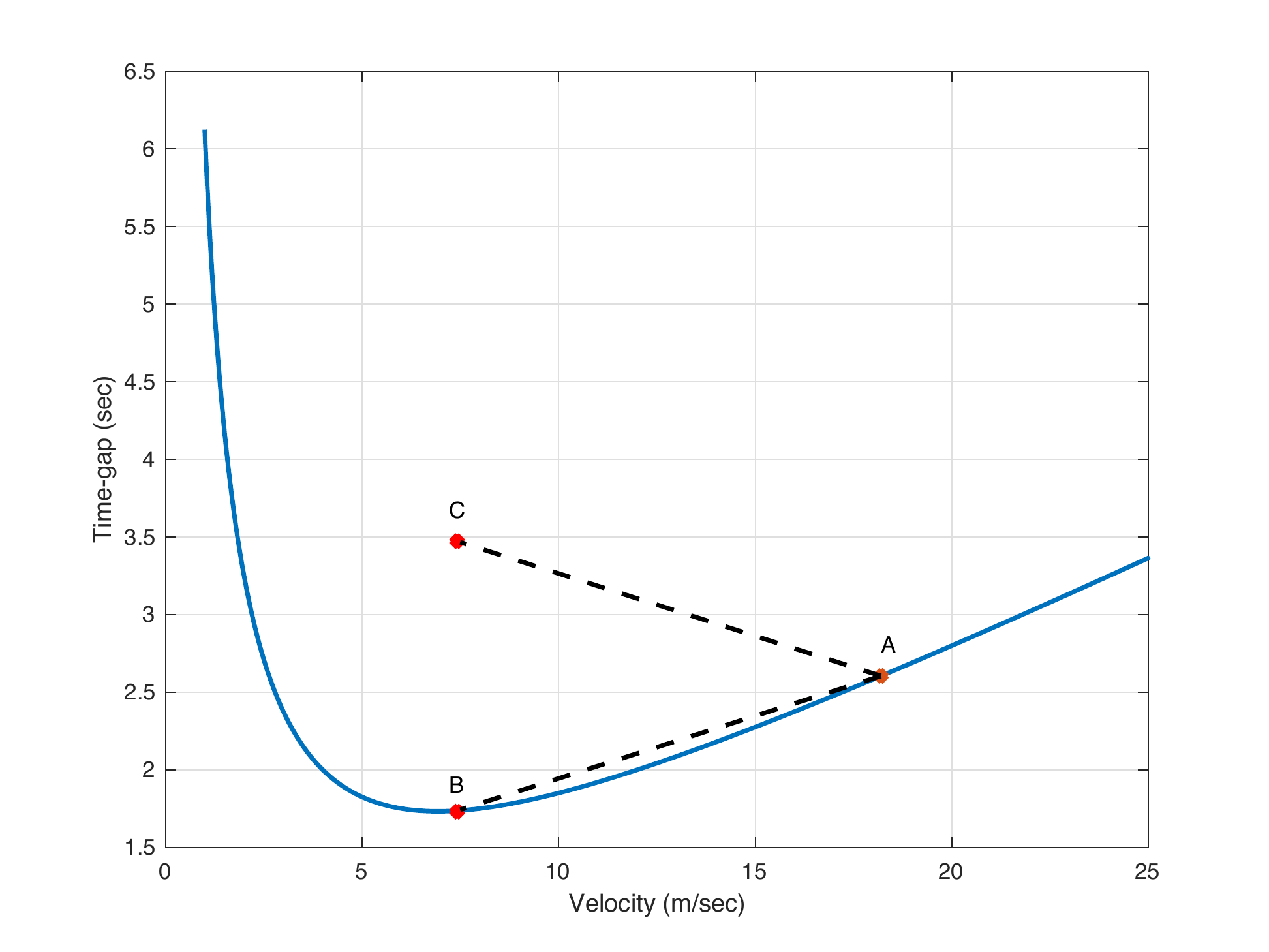}
\caption{Safety Region in terms of Time-Gap and Velocity.}
\label{fig:tv}
\end{center}
\vspace{-.1in}
\end{figure}

\begin{figure}
\begin{center}
\includegraphics[width=0.45\textwidth]{./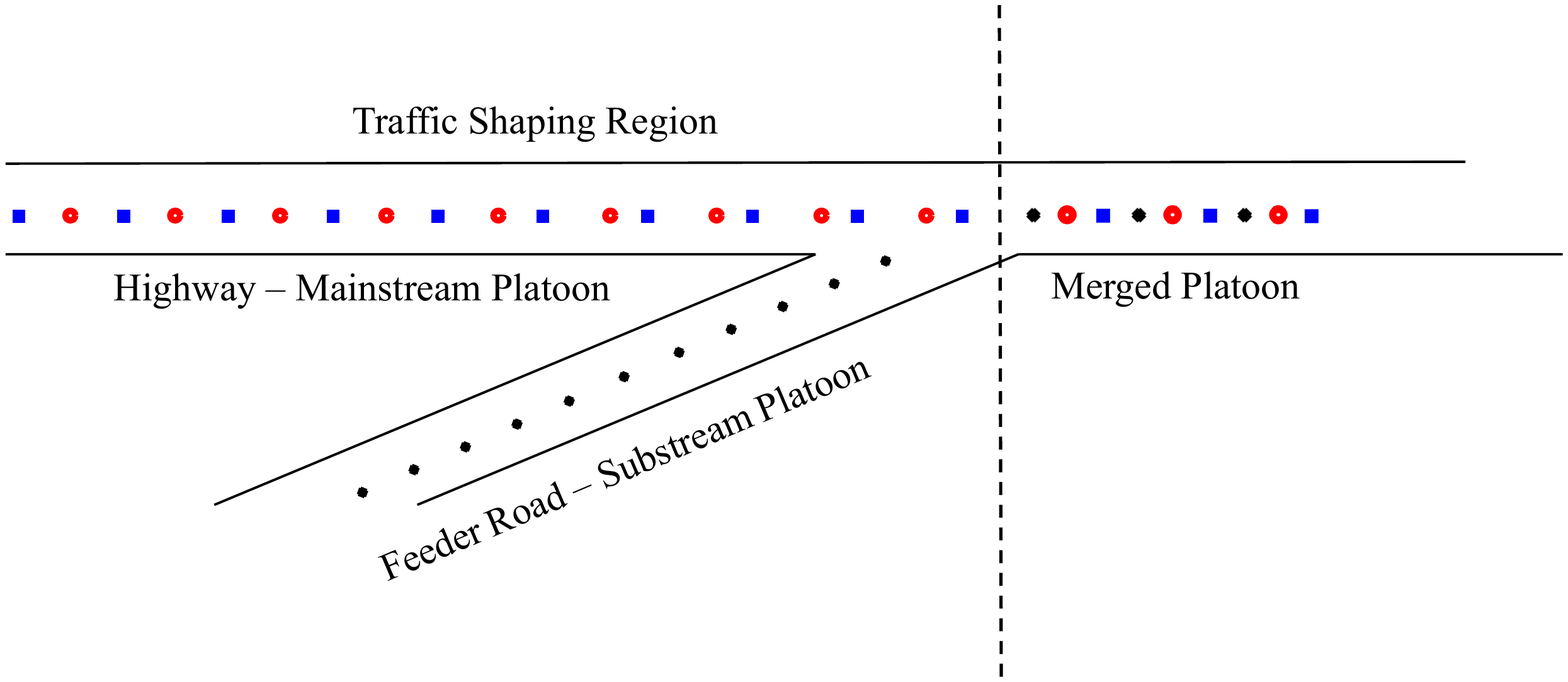}
\caption{Traffic shaping for smooth platoon merger.}
\label{fig:mot2}
\vspace{-.1in}
\end{center}
\end{figure}

Now, suppose that we have two platoons with different flow rates as in Figure \ref{fig:mot2}.  We refer to the existing platoon on the highway as the mainstream platoon, and the entering platoon from the feeder road as the substream platoon.    We wish to merge these two asymmetric platoons, in which one vehicle from the substream enters for  each pair of vehicles in the mainstream.  Assume that the initial (upstream) operating point of the mainstream is at $A = (v_1,\tau_1).$   Now, the minimum time-gap that can be supported in the merged platoon corresponds to the point $B = (v_2,\tau_2)$.   Suppose that the entering platoon from the feeder is already at $v_2.$    
%Then the substream may be conceivably merged with the mainstream, so that the merged platoon operates at point $B = (v_2,\tau_2)$.  However, in order to attain this operating point, 
In order to merge the two platoons, we need to first shape the mainstream into sub-platoons such that the velocity of all the vehicles is reduced to $v_2,$ and the time-gaps between vehicles in each sub-platoon are selected appropriately to ensure that the merged platoon operates at point $B = (v_2,\tau_2)$.

Figure~\ref{fig:intro} provides an illustration of the above idea in which all vehicles of the mainstream platoon initially operate at point $A = (v_1,\tau_1).$   The graph shows the desired velocity of every vehicle as a function of its location, and below that we provide a snapshot of how the vehicles are positioned in space at a particular time instant.  As the vehicles of the mainstream platoon proceed along the highway (left to right in the figure), they are shaped into sub-platoons, each of size two.  The lead vehicle of each sub-platoon (whose indices are even numbers) eventually operates at point  $C = (v_2,\tau_3),$ while the following vehicle in each sub-platoon (whose indices are odd numbers) operates at $B = (v_2,\tau_2).$  Notice that by shaping the mainstream platoon in this manner, we have created the necessary spacings such that the substream  platoon can be merged into the gaps created, with all vehicles in the merged platoon operating at point $B = (v_2,\tau_2).$   The procedure for merging shaped platoons is relatively straightforward, and is not the focus of the paper.  Our main goal is the design of a controller that can undertake traffic shaping of a single platoon.
\begin{figure}
\includegraphics[width=0.5\textwidth]{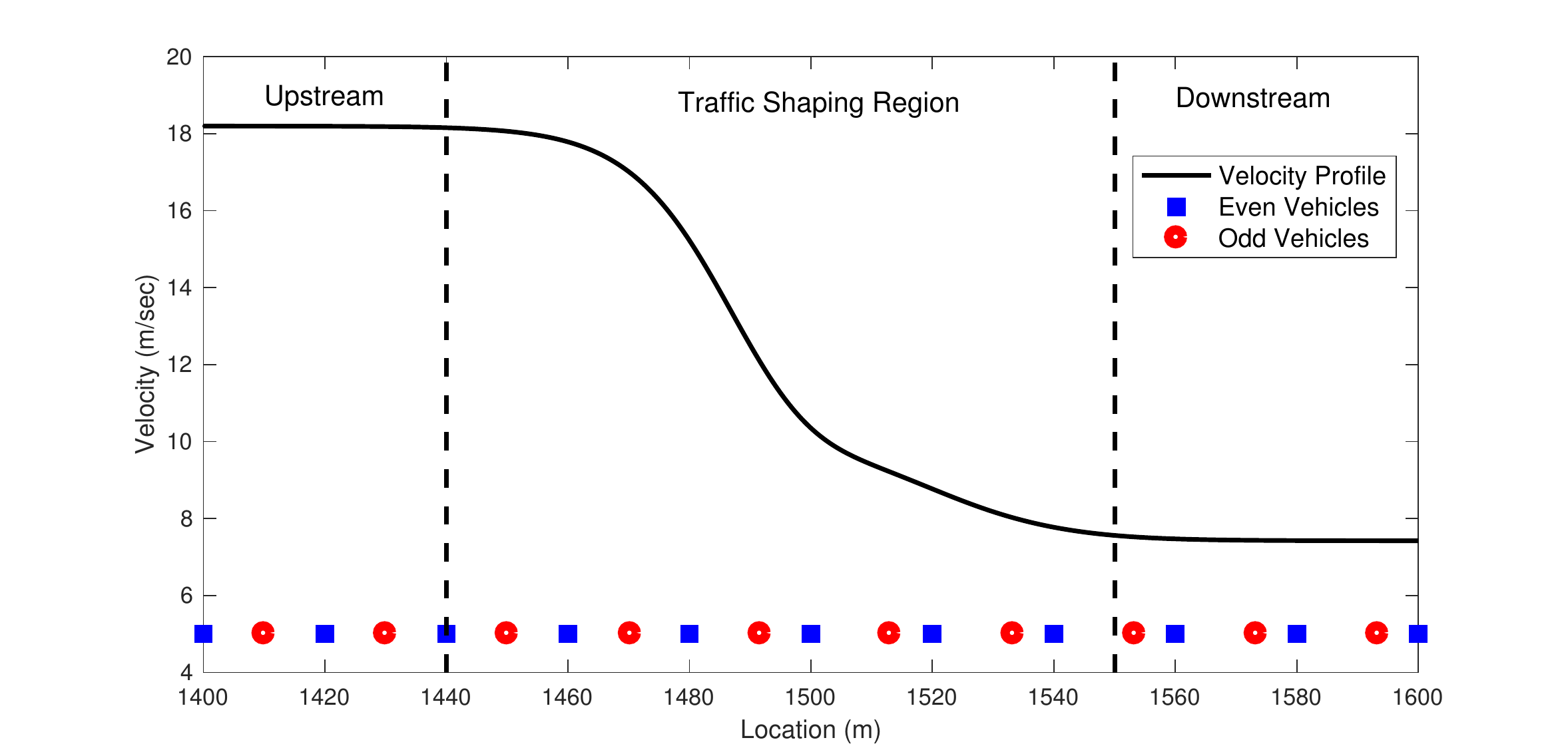}
\caption{Traffic shaping a platoon into sub-platoons of size two.}
\label{fig:intro}
\vspace{-.1in}
\end{figure}

The shaping of a given platoon to attain a desired operating point requires the specification of the time-gap and velocity as a function of location, which we refer to as a to as a \emph{time-gap and velocity profile}.  As in the example above, these desired profiles could potentially be different for each vehicle, and need to be selected carefully so that they can be realized using a simple controller.  In our example,  even vehicles move from operating point A to C , while odd vehicles move from A to B.  Hence, we require different desired time-gap profiles $\tau_{i,des}(s),$ based on whether the vehicle index is even or odd.    Since the target velocity of vehicles is identical, the desired velocity profile, $v_{des}(s)$ is the same for all of them. %However, time-gap targets are different, due to gap creation between vehicles. Therefore, time-gap profile, $\tau_{i,des}$, of each vehicle in a platoon has to be chosen individually.  
We will discuss how to design $\tau_{i,des}(s)$ and $v_{des}(s)$ in detail in Section~\ref{sec:prof}.

\subsection{Related Work}
There is much recent work on autonomous vehicle control. For instance, \cite{s1998} designs spacing policies that could lead to higher roadway throughput or reduce the congestion.  Most of these studies consider the Lagrangian viewpoint (i.e., from the perspective of a moving vehicle), and focus on the stability of traffic policies under a fixed spacing between vehicles (constant distance headway) or fixed time to collide with the preceding vehicle (constant time headway). For instance, the stability of a platoon under decreased time headway is studied in \cite{time1,time2,time3}.  While \cite{time1,time2} focus on performance of constant time-headway policy, \cite{time3} states that variable time-headway policy outperforms previous one in terms of traffic flow.
%uses notions of traffic flow stability to indicate that a constant time-headway policy is inherently unstable, and focuses on stability of a variable time-headway policy. 

Distance-based spacing policies are studied in  \cite{comp,space,space2,distance1}.   In \cite{comp} it is indicated that a distance-based spacing policy leads to higher throughput than time-headway policy, however, at the cost of more inter-vehicular communication.  Constant distance headway is considered in \cite{space,space2} , where the focus is on stability of either homogeneous or heterogeneous platoons of vehicles.  In \cite{distance1}, it is proven that variable distance headway as a function of velocity leads to safer and higher throughput than constant time-headway.

The above works are interested in traffic flows as a function of time, and do not study traffic shaping as a function of location.  However, as we indicated in our motivating example, events such as lane drops and mergers require a platoon to be at a specific operating points at a particular location.  The first work that we are aware of that considers the Eulerian viewpoint is \cite{Bessel}, which introduces a constant time-gap policy.  They show that a platoon following a constant time-gap policy can be string stable. 
%However, a constant time-gap implies a fixed flow rate, and only allows change of the operating point along the velocity axis in Figure~\ref{fig:tv}.

In the context of platoon merger, \cite{merge1,merge2} propose communication-based solutions to create gaps in a platoon into which another platoon maybe merged.  However, stability is not guaranteed through their communication process.  

The goal of the above work is to maintain constant headways in a platoon, which implies a fixed flow rate, and only allows change of the operating point along the velocity axis in Figure~\ref{fig:tv}.   Thus, none of the above-mentioned articles provide results on shaping traffic along the roadway to create variable gaps as a function of location that we desire.   
 
%Therefore, we introduce a new policy to create gap as required for merging. 

\subsection{Main Results}

Our paper has two main contributions.  The first is the design of a string-stable controller that is capable of tracking a multi-dimensional target profile (both time-gap and velocity) at each vehicle.  The technique that we use to obtain the controller is based on feedback linearization \cite{khalil}, and the controller itself can be considered as a generalization of the one presented in \cite{Bessel} in a high level sense, wherein space (not time) is used as the independent variable.  As mentioned above, the basis for such design is the Eulerian view of the system that permits us to connect the safety of flows (which must hold at all locations) with the idea of changing flows via traffic shaping (also as a function of location).   We show that each vehicle is able to track the desired profile, and that the platoon as a whole is string stable.

Our second contribution is in determining the nature of the feasible traffic profile under safety constraints.  %Using a closed form solution to the dynamics,
We use the context of traffic shaping for platoon merging, where greater flow has to be supported after the merger.  Further, the actual traffic shaping must happen is as short a distance as possible, since slowing down the vehicles far ahead of the merger point would cause increased travel times.  Hence, we pose the problem of profile design as an optimization problem in which the traffic shaping region must be minimized subject to safety constraints, as well as the deceleration limits of the vehicles.  We show how these boundary conditions imposed by safety considerations can be used in the design of time-gap and velocity profiles that can be attained by our controller.  Finally, we use numerical simulations using our combination of profile and controller to illustrate how it is successful in attaining the desired traffic shaping.

\section{System Model and Control Law}
\label{sec:control}
%Based on discussion and motivation elaborated in previous section, 
We now design a control law to attain a given time-gap and velocity profile. This section deals with the system model and controller design that is capable of tracking the desired target, while the next section will introduce a  (modified) concept of string stability and conduct stability analysis.

Consider an infinite platoon of homogenous vehicles. As described in the previous section, we consider a second-order kinematic model of vehicles as opposed to a third-order model that includes lag such as that considered in \cite{model1, model2}. However, essentially the same approach would work for the for third-order model as well, although the resultant controller would be more complex and less intuitive.   We begin with the basic second order kinematic equations
\begin{subequations}
\begin{equation}
\frac{ds_i}{dt}=v_i(t),
\end{equation}
\begin{equation}
\frac{dv_i}{dt}=u_i(t),
\end{equation}
\end{subequations}
where $s_i(t)$ is location of $i^{th}$ vehicle at time $t$, $v_i(t)$ is velocity of the vehicle at time $t$ and $u_i(t)$ is input to the vehicle $i$. The initial location of $i$ starts from $0$.  Index $i=0$ indicates the leading vehicle.  Notice that time in the independent variable in the above model.   

Supposing that velocities are positive, it is possible to change the independent variable to location $s$ as follows:
\begin{subequations}
\begin{equation}
\frac{dt_i}{ds}=\frac{1}{v_i(s)},
\end{equation}
\begin{equation}
\frac{dv_i}{ds}=\frac{u_i(s)}{v_i(s)},
\end{equation}
\label{eq:stat1}
\end{subequations}
where $t_i(s)$ is the time instant when vehicle $i$ passes location $s.$  Therefore, the time-gap between two vehicles is $t_i(s)-t_{i-1}(s).$ %technically.

For the infinite platoon described above, the lead vehicle ($i=0$) needs to only track a given velocity profile.  However, other vehicles have to track the velocity profile as well as their time-gap profile.  We can define the state of each vehicle in terms of the errors that it sees between the desired targets and the actual value. Then, considering (\ref{eq:stat1}), we define two tracking error terms for each vehicle as follows:
\begin{subequations}
\begin{equation}
e_i(s)=\frac{1}{v_i(s)}-\frac{1}{v_{des}(s)},
\label{eq:vel-er}
\end{equation}
\begin{equation}
\Delta_i(s)=t_i(s)-t_{i-1}(s)-\tau_{i,des}(s),
\label{eq:delta1}
\end{equation}
\label{eq:stats}
\end{subequations}
where $v_{des}(s)$ is the desired velocity profile and $\tau_{i,des}(s)$ is the desired time-gap profile. % The first error term is concerned with errors in tracking the desired velocity profile, while teh second is concerned with the time gap profile.

We make following assumption on the time-gap profile.
\begin{assumption}
%\textit{Assumption 1}: 
$\tau_{i,des}$ is twice continuously differentiable with respect to $s$. 
\end{assumption}

Effectively, each vehicle tracks its own time-gap profile.  An appropriate pattern of time-gap tracking divides the platoon to sub-platoons. Therefore, there are sub-platoons with smaller time-gap followed by another identical sub-platoon. Choosing the proper form will ensure string stability as will be discussed in the next section.

Now, we will derive control laws for the lead vehicle and the following vehicles. Since there is no time-gap error for the lead vehicle, the dynamics for the lead vehicle is
\begin{equation}
\frac{de_0}{ds}=-\frac{u_0(s)}{v^3_0(s)}-\frac{\partial (1/v_{des})}{\partial s}.
\label{eq:seq}
\end{equation}

Applying feedback linearization \cite{khalil} (i.e., choosing a control input such that the system is linearized) on the dynamics of the lead vehicle using control input 
\begin{equation}
u_0(s) = v^3_0\left[ pe_0(s)-\frac{\partial (1/v_{des})}{\partial s}\right],
\label{eq:u0}
\end{equation}
where $p > 0$ leads to
\begin{equation}
\frac{de_0(s)}{ds}=-pe_0(s).
\label{eq:e0-con}
\end{equation}
%where $p > 0$. 
Hence, the dynamics of the lead vehicle is linear.

For other vehicles, we need to differentiate the error term (\ref{eq:delta1}) twice to obtain the input to the vehicles. Applying feedback linearization, the control input to such a following vehicle would be
\begin{equation}
u_i(s)=v^3_i(s)\left[ p_0\Delta_i+p_a\frac{d\Delta_i}{ds}+\frac{u_{i-1}(s)}{v^3_{i-1}(s)}-\frac{\partial^2\tau_{i,des}}{\partial s^2}\right],
\label{eq:ui}
\end{equation}
where $p_0 > 0$ and $p_1 > 0$. With V2V/V2I communication it is possible to obtain information about the preceding vehicle. Thus, the the dynamics of a following vehicles is
\begin{equation}
\frac{d^2\Delta_i}{ds^2}=-p_0\Delta_i-p_1\frac{d\Delta_i}{ds},
\end{equation}
where $p_0 > 0$ and $p_1 > 0$. 

Choosing $e_i(s)$ as the output of the each vehicle, $e_i(s)$ for following vehicles would be
\begin{equation}
e_i(s)=\frac{d\Delta_i}{ds}+e_{i-1}(s)+\frac{\partial\tau_{i,des}}{\partial s},
\end{equation}
since $\frac{d\Delta_i}{ds}=\frac{1}{v_i}-\frac{1}{v_{i-1}}-\frac{\partial\tau_{i,des}}{\partial s}$.

Using the inputs in (\ref{eq:u0}) and (\ref{eq:ui}), the dynamics of the platoon would be
\begin{subequations}
\begin{equation}
\frac{de_0}{ds}=-pe_0(s),
\label{eq:lead}
%\vspace{10mm}
\end{equation}
\begin{equation}
\frac{d^2\Delta_i}{ds}=-p_0\Delta_i(s)-p_1\frac{d\Delta_i}{ds},
\label{eq:delta}
\end{equation}
\begin{equation}
e_i(s)=\frac{d\Delta_i}{ds}+e_{i-1}(s)+\frac{\partial\tau_{i,des}}{\partial s}.
\label{eq:ei}
\end{equation}
\label{eq:dynam}
\end{subequations}
where $p > 0$, $p_0 > 0$ and $p_1 > 0$.
%\par In the next section, we introduce the modified version of string stability. Next, stability analysis of controller proposed in this section is studied according to modified string stability.

\section{Stability Analysis}
\label{sec:stability}

In this section, we will prove the stability of our controllers.  We have two notions of stability that will be required at an individual vehicle level (plant stability), and at the level of a string of
vehicles (string stability).  

%Two assumptions on the time gap profile will be required.

A second assumption on the time-gap profile will be required to
show stability.
\begin{assumption}
$\lim_{s\to\infty} \frac{\partial\tau_{i,des}}{\partial s}=0$.
\end{assumption}

\subsection{Plant Stability}

%In previous section, we found the dynamics of the whole platoon. In this section, 
%We first prove that each vehicle is asymptotically stable, i.e., the error in its control goes to zero as its location, $s$ becomes large.  

\begin{theorem}
%\par \textit{Theorem 1.} 
The unique equilibrium point of the dynamics defined in (\ref{eq:dynam}) is the origin ($0$ error).  The equilibrium point is asymptotically stable if the parameters $p$, $p_0$ and $p_1$ are chosen to be positive.
\end{theorem}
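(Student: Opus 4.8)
The plan is to exploit the fact that, after feedback linearization, the closed-loop dynamics in (\ref{eq:dynam}) are \emph{linear} and moreover possess a decoupled, cascade structure, so that asymptotic stability reduces to inspecting the eigenvalues of two elementary linear subsystems and then propagating convergence through the chain of vehicles.

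First I would identify the genuine state variables: $e_0$ for the lead vehicle and the pair $(\Delta_i, d\Delta_i/ds)$ for each follower, while $e_i$ is an output determined algebraically by (\ref{eq:ei}). To locate the equilibria, I would set every $s$-derivative to zero. Equation (\ref{eq:lead}) forces $e_0 = 0$ since $p > 0$, and setting $d^2\Delta_i/ds^2 = d\Delta_i/ds = 0$ in (\ref{eq:delta}) forces $\Delta_i = 0$ because $p_0 > 0$; positivity of the parameters is exactly what guarantees these solutions are unique. To conclude that the output errors $e_i$ also vanish, I would invoke Assumption 2: in (\ref{eq:ei}) the residual term $\partial\tau_{i,des}/\partial s$ tends to zero, so the origin is the unique equilibrium in the relevant asymptotic sense.

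For asymptotic stability I would treat the subsystems in order. The lead-vehicle equation (\ref{eq:lead}) is a scalar linear ODE with solution $e_0(s) = e_0(0)e^{-ps}$, which decays since $p>0$. Each follower's $\Delta_i$ obeys the identical second-order linear ODE (\ref{eq:delta}), whose characteristic polynomial is $\lambda^2 + p_1\lambda + p_0$; by the Routh--Hurwitz criterion both roots have negative real part precisely when $p_0>0$ and $p_1>0$, so $\Delta_i \to 0$ and $d\Delta_i/ds \to 0$. Finally I would handle the output chain (\ref{eq:ei}) by induction on $i$: assuming $e_{i-1}\to 0$, and using $d\Delta_i/ds\to 0$ together with Assumption 2 to annihilate $\partial\tau_{i,des}/\partial s$, it follows that $e_i\to 0$ as well, completing the induction from the base case $e_0\to 0$.

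The step I expect to be the main obstacle is this last one: the $e_i$ relation is both \emph{non-autonomous} (through $\partial\tau_{i,des}/\partial s$) and \emph{coupled across the string} through $e_{i-1}$, so convergence of $e_i$ does not follow from eigenvalue analysis of the individual subsystems alone. The care lies in arguing that the induction remains uniform enough that errors accumulated along the platoon stay controlled, which is exactly where Assumption 2 is essential: without the decay of the profile slope, the output errors would inherit a nonvanishing bias at each stage of the cascade.
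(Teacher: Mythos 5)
Your proposal is correct and follows essentially the same route as the paper: solve the scalar lead-vehicle ODE, use negativity of the real parts of the roots of $\lambda^2+p_1\lambda+p_0$ for the decoupled $\Delta_i$ dynamics, and then induct along the output chain (\ref{eq:ei}), invoking Assumption~2 to kill the residual $\partial\tau_{i,des}/\partial s$ term. The only cosmetic differences are that the paper writes out explicit closed-form solutions (roots and Cramer's-rule constants, plus an explicit summed expression for $e_i(s)$) where you argue via Routh--Hurwitz and limits, which is if anything slightly cleaner.
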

\begin{proof}
The proof is though induction over vehicles, since the error of each following vehicle $i$ is dependent on vehicle $i-1$ as given in (\ref{eq:dynam}).  First, we start with the lead vehicle dynamics  (\ref{eq:lead}), which is a first-order differential equation.  Solving, the velocity tracking error of lead vehicle is:
\begin{equation}
e_0(s)=e_0(0)\exp(-ps),
\label{eq:e0}
\end{equation}
for $s>0$ ,and where $e_0(0) = \frac{1}{v_0(0)} - \frac{1}{v_{des}(0)}$. Therefore,
$$\lim_{s\to\infty} e_0(s) = 0,$$
since $p>0$. Thus, the error term of lead vehicle is asymptotically stable.

%According to (\ref{eq:e0}), and for the fact that $\frac{dt_i}{ds} = \frac{1}{v_i}$, the trajectory of lead vehicle is:
%\begin{equation}
%t_0(s) = \frac{-e_0(0)}{p} \exp(-ps) + \int_0^s \frac{dx}{v_{des}(x)},
%\label{eq:l_traj}
%\end{equation} 

%Now, for following vehicles, first we solve equation (\ref{eq:delta}). The solution to this second order differential equation is:

Next, the solution to second-order equation (\ref{eq:delta}), which represents the dynamics of time-gap tracking error is
\begin{equation}
\Delta_i(s) = A_i\exp(r_1s)+B_i\exp(r_2s),
\label{eq:deltai}
\end{equation}
where $r_1$ and $r_2$ are roots of characteristic function of (\ref{eq:delta})
\begin{equation}
r_1 = \frac{-p_1-\sqrt{p_1^2-4p_0}}{2p_1},
r_2 = \frac{-p_1+\sqrt{p_1^2-4p_0}}{2p_1}.
\end{equation}
Also, $A$ and $B$ are determined by initial conditions via Cramer's rule as
\begin{equation}
A = \frac{\begin{vmatrix}
\Delta_i(0) & 1 \\
\frac{d\Delta_i(0)}{ds} & r_2
\end{vmatrix}}{\begin{vmatrix}
1 & 1 \\
r_1 & r_2
\end{vmatrix}} ,
B = \frac{\begin{vmatrix}
1 & \Delta_i(0) \\
r_2 & \frac{d\Delta_i(0)}{ds}
\end{vmatrix}}{\begin{vmatrix}
1 & 1 \\
r_1 & r_2
\end{vmatrix}},
\end{equation}
where $\Delta_i(0)$ is $t_i(0) - t_{i-1}(0) - \tau_{i,des}(0)$ and $\frac{d\Delta_i(0)}{ds}=\frac{1}{v_i(0)} - \frac{1}{v_{i-1}(0)} - \frac{\partial \tau_{i,des}(0)}{\partial s}$.  Hence, according to (\ref{eq:deltai}):
$$\lim_{s\to\infty}\Delta_i(s)=0,$$
since $\Re(r_1) , \Re(r_2) < 0$. Thus, $\Delta_i$ of each vehicle is asymptotically stable.

%On the other side, based on (\ref{eq:ei}), the following equation is derived by induction:
Finally, the error in tracking velocity of vehicle $i$ is found by solving (\ref{eq:ei}).  Applying induction on (\ref{eq:ei}), we obtain
\begin{equation}
\begin{multlined}
e_i(s) = e_0(s)+r_1\exp(r_1s)\sum_{j=1}^i A_j +\\
 r_2\exp(r_2s) \sum_{j=1}^i B_j + \sum_{j=1}^i \frac{\partial \tau_{j,des}}{\partial s}.
\end{multlined}
\end{equation}
Now, $e_0(s)$ is asymptotically stable, as proven in (\ref{eq:e0}).  The second and third terms also go to $0$ as $s\to\infty$, since $\Re(r_1), \Re(r_2)<0$.  Further, from Assumption~2,   $\tau_{i,des}(s)$, $\lim_{s\to\infty}\frac{\partial \tau_{i,des}}{\partial s} = 0$. Hence, 
$$\lim_{s\to\infty}e_i(s) = 0,$$
which shows that $e_i(s)$ is asymptotically stable.
\end{proof}

\subsection{String Stability}

In the previous subsection, we proved plant stability, meaning that each vehicle is able to track desired velocity and time-gap profiles. However, we need to guarantee that an error that occurs in one of the vehicles does not amplify unboundedly through the entire platoon. %Since, the platoon is a set of interconnected systems, error caused by external disturbances might amplify even if each system is asymptotically stable.

This system is supposed to track both velocity and time-gap profiles. Therefore, string stability must be studied for both velocity tracking error and time-gap tracking error.  According to (\ref{eq:delta}), the error of time-gap tracking is decoupled, i.e. there is no effect of proceeding vehicle on the vehicle.  Therefore, any error that occurs at any vehicle regarding time-gap tracking, would not propagate through the whole platoon.  Further, since each vehicle is asymptotically stable in terms of $\Delta_i$, this error term would damp in each vehicle.  Hence, in terms of time-gap tracking error, the dynamics of (\ref{eq:dynam}) is string stable in the notion of \cite{swaroop}.

However, as is obvious from (\ref{eq:ei}), the preceding vehicle affects the behavior of the following one. Thus, the system is not decoupled in terms of velocity tracking error.  The definition of string stability in this study is that errors should not be amplified unboundedly.  Hence, we require that given any $\epsilon >0$, there exists a $\delta >0$ such that
\begin{equation}
\Vert e_i(0)\Vert_{\infty} < \delta \Rightarrow \sup_i \Vert e_i(.)+\eta\Vert_{\infty} < \epsilon,
\label{eq:ourdef}
\end{equation}
where $\eta$ is constant.  In the above definition, $\kappa$ is an upper bound on acceptable velocity tracking error. The definition indicates that if an error occurs in one of the vehicles, will not amplify by more than $\eta$. %Therefore, in order to prove string stability of the system with dynamics (\ref{eq:dynam}) based on above definition, it is required to introduce the upper bound of $e_i(s)$, i.e. $k$.

\begin{theorem}The platoon of vehicles with dynamics (\ref{eq:dynam}) and 
\begin{equation}
\tau_{i,des}(s)=\tau_0+(-1)^iT_{des}(s) \label{eq:timegap-design}
\end{equation} is string stable with $$\eta=\vert\frac{\partial T_{des}}{\partial s}\vert.$$ 
$\tau_0$ is initial time-gap between vehicles($\tau_0=\lim_{s\to-\infty}\tau_{i,des}(s)$). $T_{des}(s)$ is a variable part of the time-gap profile and is assumed to be decreasing, twice continuously differentiable in $s$ and $\lim_{s\to\infty}\frac{\partial T_{des}}{\partial s}=0$.
\end{theorem}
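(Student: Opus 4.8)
The plan is to work directly from the closed-form expression for the velocity tracking error obtained in the proof of Theorem~1,
\[
e_i(s) = e_0(s)+r_1\exp(r_1s)\sum_{j=1}^i A_j + r_2\exp(r_2s)\sum_{j=1}^i B_j + \sum_{j=1}^i \frac{\partial \tau_{j,des}}{\partial s},
\]
and to bound $\sup_i\Vert e_i(\cdot)\Vert_\infty$ \emph{uniformly in the vehicle index} $i$. Substituting the design (\ref{eq:timegap-design}) gives $\frac{\partial\tau_{j,des}}{\partial s}=(-1)^j\frac{\partial T_{des}}{\partial s}$, and the central observation is that the alternating partial sum $\sum_{j=1}^i(-1)^j$ equals $0$ when $i$ is even and $-1$ when $i$ is odd, hence is bounded by $1$ in magnitude for \emph{every} $i$. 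Therefore the accumulated profile term obeys $\bigl|\sum_{j=1}^i\frac{\partial\tau_{j,des}}{\partial s}\bigr|\le\bigl|\frac{\partial T_{des}}{\partial s}\bigr|=\eta$, independently of $i$. This is exactly where the $(-1)^i$ structure of the design is essential: without it the term would accumulate like $i\,\partial T_{des}/\partial s$ and string stability would fail.

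Next I would control the two transient sums, which requires pinning down the initial configuration; I would take the platoon to start at the desired spacing so that $\Delta_j(0)=0$ for all $j$ (consistent with the interpretation that the disturbance enters through the velocities). Then Cramer's rule reduces the coefficients to $A_j=-B_j=-\frac{1}{r_2-r_1}\frac{d\Delta_j(0)}{ds}$, and using $\frac{1}{v_j(0)}-\frac{1}{v_{j-1}(0)}=e_j(0)-e_{j-1}(0)$ together with (\ref{eq:timegap-design}) gives $\frac{d\Delta_j(0)}{ds}=e_j(0)-e_{j-1}(0)-(-1)^j\frac{\partial T_{des}(0)}{\partial s}$. Summing over $j$, the velocity-error differences telescope to $e_i(0)-e_0(0)$, which is bounded by $2\Vert e_i(0)\Vert_\infty<2\delta$, while the profile-derivative contribution again alternates and stays bounded by $\eta$. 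Hence $\sum_{j=1}^i A_j$ and $\sum_{j=1}^i B_j$ are bounded uniformly in $i$ by a quantity of order $\delta+\eta$, and since $\Re(r_1),\Re(r_2)<0$ the factors $\exp(r_1s),\exp(r_2s)$ are bounded (indeed decaying) on $s\ge0$, so both transient terms are bounded for all $s$ and can be made arbitrarily small by shrinking $\delta$.

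Finally I would assemble the estimate by the triangle inequality: the leading term $e_0(s)=e_0(0)\exp(-ps)$ is controlled by $\delta$, the two transient terms are controlled by $\delta$ as above, and the residual profile term is bounded by the fixed offset $\eta$. Given $\epsilon>0$ one then chooses $\delta>0$ small enough that the sum of the $\delta$-dependent contributions ($e_0$ and the two transients) is below $\epsilon$; together with the $\eta$ bound on the profile term this shows the velocity error is amplified by at most $\eta$ beyond an arbitrarily small margin, matching the definition (\ref{eq:ourdef}). I expect the main obstacle to be precisely the uniformity in $i$: each additional vehicle appends a new term to $\sum_j A_j$, $\sum_j B_j$, and $\sum_j\partial\tau_{j,des}/\partial s$, so one must argue that none of these three accumulations grows with $i$. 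The point that makes this possible is the coincidence of two cancellation mechanisms---the telescoping of consecutive velocity errors and the alternation forced by $(-1)^j$---and carefully verifying both cancellations (and justifying the assumption on the initial spacing errors $\Delta_j(0)$) is the crux of the argument.
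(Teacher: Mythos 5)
Your core cancellation argument---that the design (\ref{eq:timegap-design}) makes $\sum_{j=1}^i \partial\tau_{j,des}/\partial s = \frac{\partial T_{des}}{\partial s}\sum_{j=1}^i(-1)^j$ alternate between $0$ and $-\frac{\partial T_{des}}{\partial s}$, hence stay bounded by $\eta$ uniformly in $i$---is exactly the mechanism the paper uses. The gap is in your handling of the transients, and it is visible in your own intermediate estimate. With only $\Delta_j(0)=0$ assumed, the initial slope is $\frac{d\Delta_j(0)}{ds}=e_j(0)-e_{j-1}(0)-(-1)^j\frac{\partial T_{des}(0)}{\partial s}$, so even if every initial velocity error vanishes (i.e., $\delta\to 0$) the coefficients $A_j=-B_j=-\frac{1}{r_2-r_1}\frac{d\Delta_j(0)}{ds}$ retain a contribution of size $\vert\frac{\partial T_{des}(0)}{\partial s}\vert$. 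You correctly bound $\sum_{j=1}^i A_j$ and $\sum_{j=1}^i B_j$ by a quantity of order $\delta+\eta$, but then assert the transient terms ``can be made arbitrarily small by shrinking $\delta$'' and later count them among the ``$\delta$-dependent contributions.'' That is false: the $\eta$-sized part survives, multiplied by the factor $\frac{\vert r_1\vert+\vert r_2\vert}{\vert r_2-r_1\vert}$, which generally exceeds $1$ and blows up as $p_1^2\to 4p_0$ (repeated roots, where the Cramer formula you rely on degenerates anyway). The bound you can actually assemble is $\sup_s\vert e_i(s)\vert\le \epsilon+\eta+C\eta$ with $C=\frac{\vert r_1\vert+\vert r_2\vert}{\vert r_2-r_1\vert}$, which does not prove string stability with the stated constant $\eta=\vert\frac{\partial T_{des}}{\partial s}\vert$.

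The paper sidesteps this entirely by assuming \emph{both} $\Delta_i(0)=0$ and $\frac{d\Delta_i(0)}{ds}=0$, i.e., the followers' initial velocity differences are consistent with the profile slope and the disturbance enters only through $e_0$. Under that assumption the homogeneous dynamics (\ref{eq:delta}) force $\Delta_i(s)\equiv 0$, so there are no transient terms at all; (\ref{eq:ei}) reduces to $e_i(s)=e_{i-1}(s)+(-1)^i\frac{\partial T_{des}}{\partial s}$, and induction plus the alternating-sum bound gives $\vert e_i(s)\vert\le\vert e_0(s)\vert+\vert\frac{\partial T_{des}}{\partial s}\vert$ in two lines. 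To repair your proof, either adopt the same restriction on $\frac{d\Delta_j(0)}{ds}$ (which collapses your argument to the paper's), or keep your more general initial-condition class and accept the weaker conclusion with the inflated constant $(1+C)\eta$---but the latter is a different theorem than the one stated.
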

\begin{proof} Consider the dynamics in (\ref{eq:dynam}) with $p>0$, $p_0>0$ and $p_1>0$, initial conditions of $\Delta_i(0)=\frac{d\Delta_i(0)}{ds}=0,$ and velocity tracking error $e_0(s)$ for the leading vehicle.  For all $s\geq 0,$ the errors $e_i(s)$ of following vehicles would satisfy
\begin{equation}
\vert e_i(s)\vert\leq \vert e_0(s)\vert +\vert \frac{\partial T_{des}}{\partial s}\vert.
\label{eq:str}
\end{equation}

In order to prove (\ref{eq:str}), consider the (\ref{eq:ei}) and
\begin{equation}
e_1(s)=e_0(s)+\frac{\partial T_{des}}{\partial s}.
\end{equation}

Thus, by induction we have
\begin{equation}
e_i(s)=e_0(s)+\frac{\partial T_{des}}{\partial s} \sum_{j=1}^{i} (-1)^j.
\label{eq:1}
\end{equation}

Since sequence $(-1)^i$ is periodic, $\sum_{j=1}^{i} (-1)^j$ is bounded by $1$. Further, applying the triangle inequality on (\ref{eq:1}) yields the inequality (\ref{eq:str}).  Hence, it is obvious that $\eta$ in (\ref{eq:ourdef}) is $\vert e_0(s)\vert + \vert\frac{\partial T_{des}}{\partial s}\vert$, which means that $\eta=\vert\frac{\partial T_{des}}{\partial s}\vert$.  Consequently, the dynamics in (\ref{eq:dynam}) is string stable in terms of both errors of tracking the velocity and time-gap profiles.
\end{proof}

\section{Profile Design}
\label{sec:prof}

In this section our objective is to design time-gap and velocity profiles that will be used as inputs to the control laws derived in Section~\ref{sec:control}.  There are three conditions that must be satisfied during the design process.  First, the traffic shaping procedure must be undertaken in the smallest length of roadway, since this would allow for vehicles to travel at the (high) initial velocity for the longest distance, and hence minimize the travel time.   Second, the deceleration required of the vehicles must remain bounded by the maximum acceptable value.  Finally, the time-gaps and velocities of each vehicle at any location must lie in safety region shown in Figure \ref{fig:tv}. %To design the profiles subject to above-mentioned conditions, we formulate an optimization problem.

\begin{figure*}[htbp]
\centering
\begin{minipage}{.3\textwidth}
\centering
\includegraphics[width=1\columnwidth]{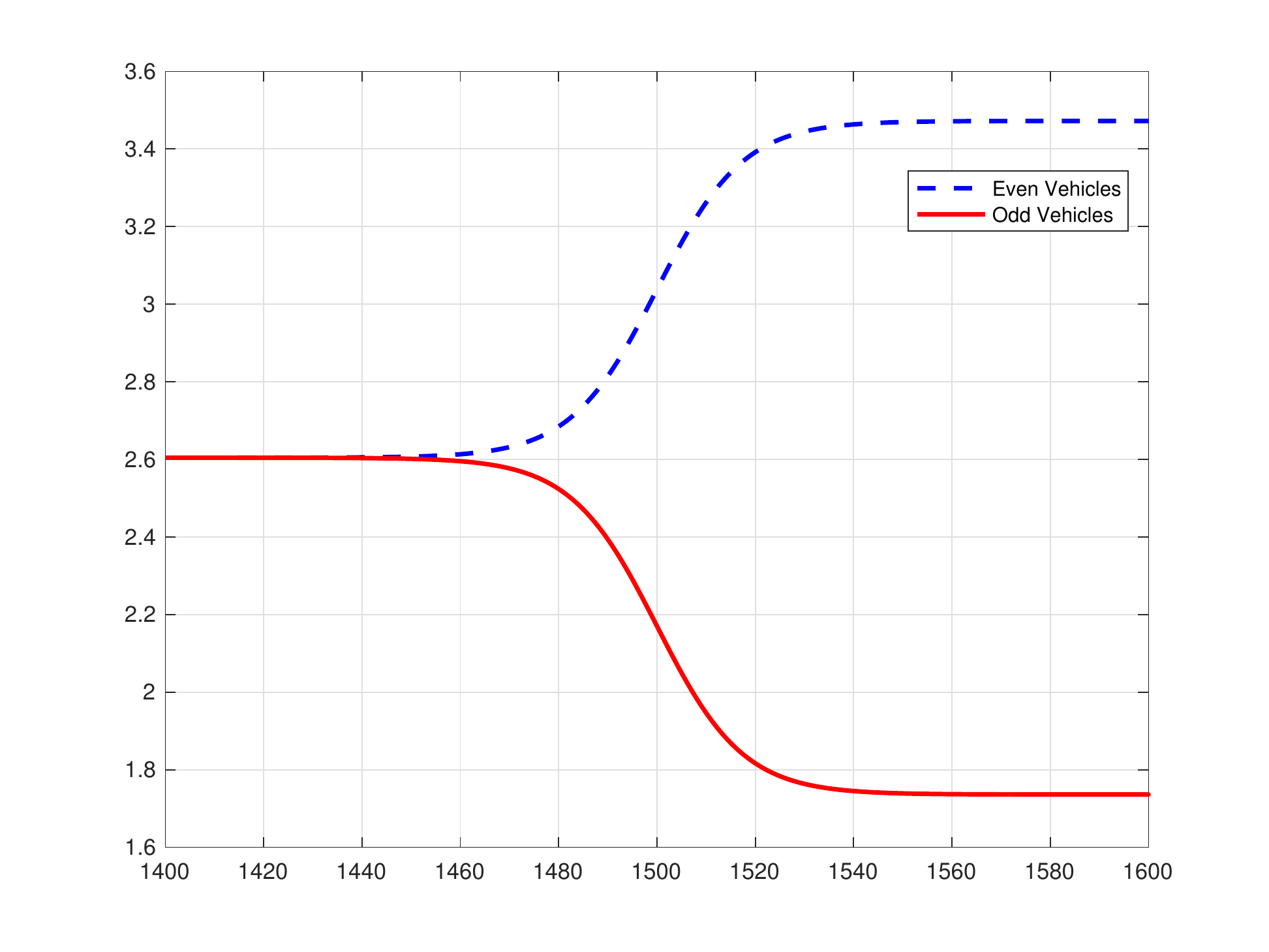}
\caption{Time-Gap Profile v.s. Location}
\label{fig:time-prof}
\end{minipage}\hfill
\begin{minipage}{.3\textwidth}
\centering
\includegraphics[width=1\columnwidth]{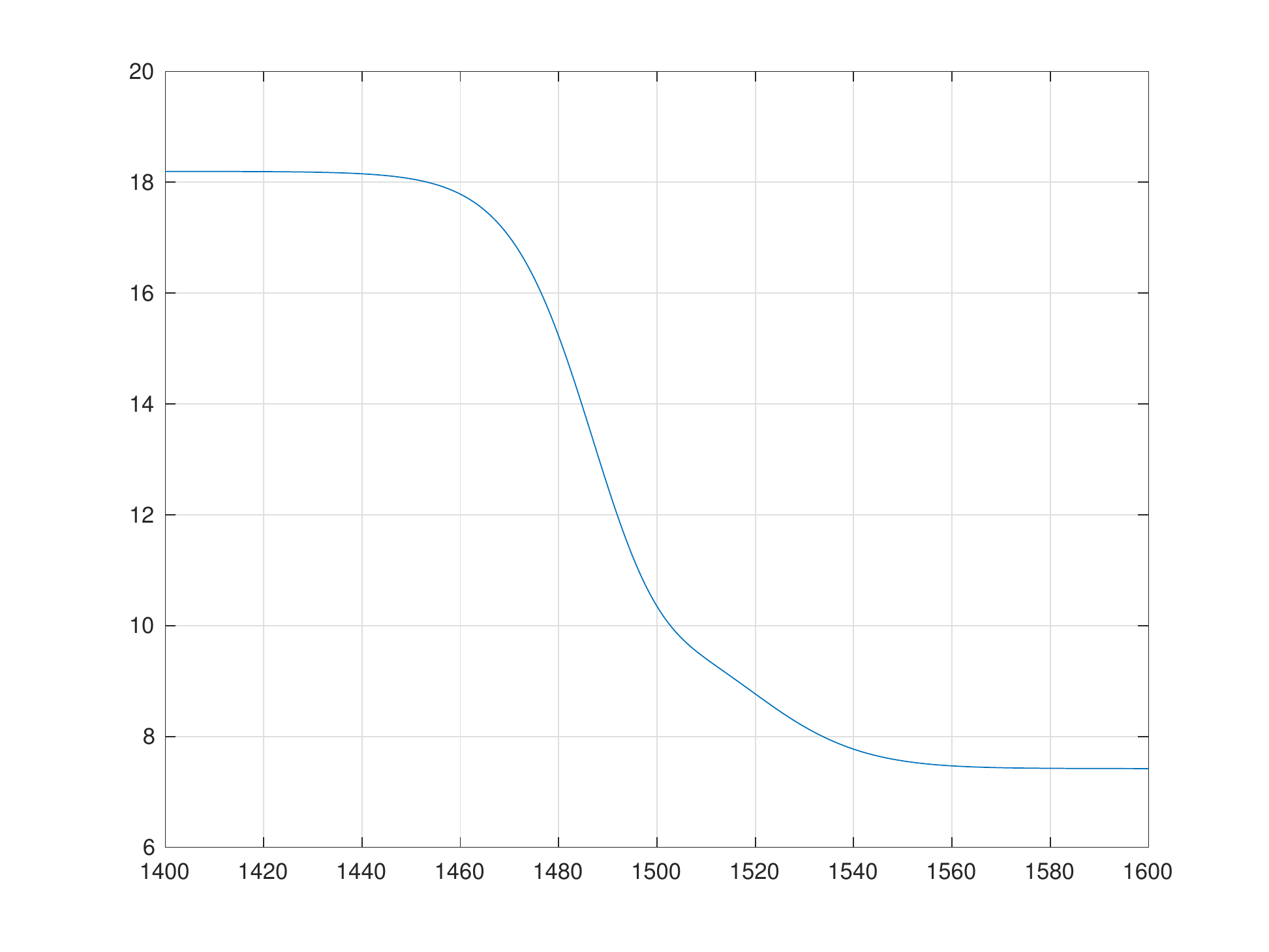}
\caption{Velocity Profile v.s. Location}
\label{fig:vel-prof}
\end{minipage}\hfill
\begin{minipage}{.3\textwidth}
\centering
\includegraphics[width=\columnwidth]{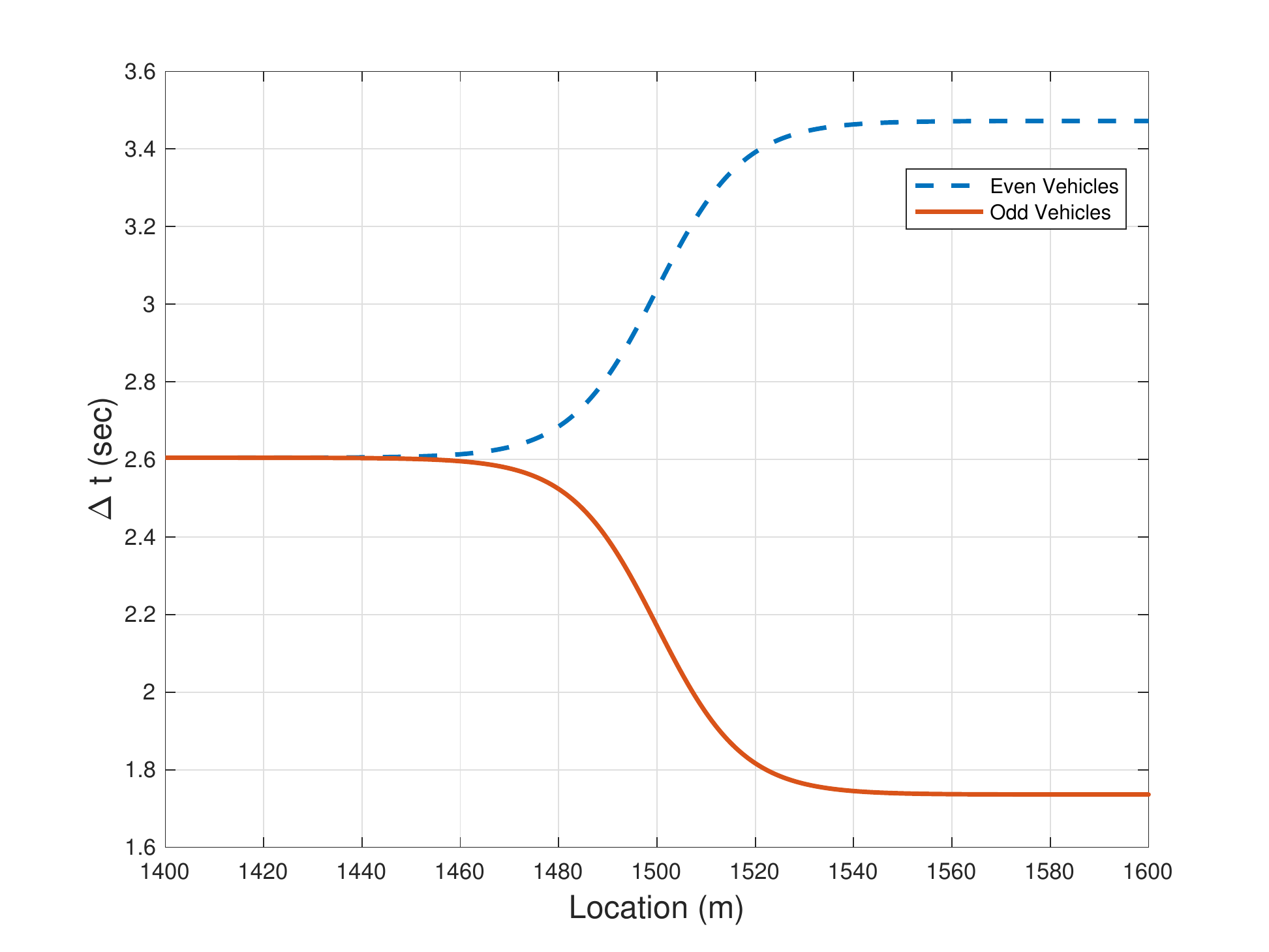}
\caption{Realized Time-Gap Profiles.}% for odd and even vehicles.}
\label{fig:tg}
\end{minipage}
\end{figure*}
\subsection*{Time-Gap Profiles}

Our first problem is to design a time-gap profile that is simple (has few parameters over which to optimize), and satisfies Assumption 1 and Assumption 2. %is twice-differentiable as required by (\ref{eq:ui}).  
Our choice in this study is the logistic function. 
%First, it reduces the calculation complexity for the optimization problem due its properties. Furthermore, recalling (\ref{eq:ui}), our controller requires twice-differentiable time-gap profile, which logistic function satisfies it. 
Hence, the desired time-gap profiles based on general form given in (\ref{eq:timegap-design}) are
\begin{equation}
\tau_{i,des}(s) = \tau_0 + (-1)^i (\alpha + \beta \tanh(\gamma s)),
\label{eq:tg-log}
\end{equation}
where $\gamma > 0$ and $\alpha$ and $\beta$ are to be chosen based on the desired initial and final value of time-gaps and the kinematic constraints.  The initial value of time-gap profile is $\lim_{s\to-\infty}\tau_{i,des}(s)$, which is $\tau_0$, and final value is $\lim_{s\to\infty}\tau_{i,des}(s)$.  For instance, if $\tau_{odd,end}$ is final value of time-gap of odd vehicles, $\alpha$ and $\beta$ would be:
\begin{equation}
\alpha =  \beta = \frac{\tau_0 - \tau_{odd,end}}{2}.
\label{eq:alpha}
\end{equation}

The parameter $\gamma$ determines the maximum slope of the logistic function.  In our platooning scenario, a larger slope intuitively implies a shorter distance required for traffic shaping.  However, we cannot choose $\gamma$ arbitrarily large, since that would impact the maximum deceleration required of the vehicles.  In order to understand this intuition precisely, we first need to determine the impact of the time-gap profile choice on the velocity profile.

\subsection*{Velocity Profile}

The velocity profile must be designed jointly with the time-gap profile so as to ensure that vehicles are within the safe region of operation (see Figure~\ref{fig:tv}) at all times.  Now, the odd vehicles are the ones that operate on the boundary of the safe region (the operating point for odd vehicles goes from point A to B in Figure~\ref{fig:tv}).  Hence, we design the velocity profile of odd vehicles in a manner that they operate on the boundary of the safe region as specified by (\ref{eq:tv}).  Thus, from (\ref{eq:tv}), the velocity profile of odd vehicles as a function of time-gap is
\begin{eqnarray}
%\begin{multlined}
v_{odd}(s) =  \hspace{2.5in} \nonumber \\ \left( \tau_{odd,des}(s) a_{min} +
 \sqrt{(\tau_{odd,des}(s) a_{min})^2 - 2la_{min}} \right),
%\end{multlined}
\label{eq:v-t}
\end{eqnarray}
% Velocity profile of lead vehicle is derived based on time-gap profiles. In order to derive velocity profile, we consider equation (\ref{eq:tv}). In this equation, velocity and time-gap of vehicles operating on boundary of Figure \ref{fig:tv} are considered. Therefore, it gives the velocity and time-gap of odd vehicles. Because, odd vehicles are decreasing their time-gap and velocity, while even vehicles are increasing. Hence, even vehicles remain inside the region and odd vehicles operate on boundary. Considering (\ref{eq:tv}),
where $\tau_{odd,des}(s)$ is the time-gap profile for odd vehicles as determined from (\ref{eq:tg-log}).

We next find the velocity profile of even vehicles, which is the same as the lead vehicle desired velocity profile $v_{des}(s)$ (since the lead vehicle is even).  To determine the velocity profile, we consider  (\ref{eq:ei}) in which we set all the error terms equal to zero (since profile design is under the assumption of ideal operation).  Thus, we set $\frac{d\Delta_i}{ds}=e_{i-1}(s) =0.$  Hence, from (\ref{eq:vel-er}) 
 % , we consider equation (\ref{eq:ei}) for odd vehicles. In this equation, $\frac{d\Delta_i}{ds}$ and error of even vehicle ahead $e_{i-1}(s)$ are $0$. Because, for profile design we deal with ideal case, which $\frac{d\Delta_i}{ds} = 0$ and even vehicles are tracking the velocity profile precisely ($e_{i-1}=0)$,i.e. $v_{even}(s) = v_{des}(s)$. Thus, $v_{des}$ is:
\begin{equation}
 v_{des}(s) = v_{even}(s)=\frac{v_{odd}(s)}{1-v_{odd}(s)\frac{\partial T_{des}}{\partial s}}.
\label{eq:vdes}
\end{equation}

%\vspace{-0.1 in}

\subsection*{Optimization Problem}

Now that we have candidate time-gap profiles and the corresponding safe velocity profiles,  we have to determine the value of $\gamma$ so that the deceleration process occurs in as small as possible distance, while ensuring that deceleration remains bounded.  The following optimization problem may be used to determine the value of $\gamma$:
\begin{equation}
\begin{aligned}
& \text{maximize}
& & \gamma \\
& \text{s.t.}
& & \min a_{even} \geq a_{min} \\
&&& \min a_{odd} \geq a_{min} \\
&&& \gamma > 0.
\end{aligned}
\label{eq:opt}
\end{equation}

% Optimization problem of (\ref{eq:opt}) gives the most appropriate value of $\gamma$, since, as $\gamma$ gets larger, the time-gap profile would become steeper, according to properties of logistic function. Hence, the traffic shaping procedure takes place in a smaller distance. However, if $\gamma$ exceeds some amount, it causes enormous deceleration, which violates the constraints on minimum acceleration. In conclusion, constraints on minimum acceleration limits the value of $\gamma$. Furthermore, maximum amount of acceleration is not considered in the constraints of (\ref{eq:opt}), since the whole process of traffic shaping is occurring by deceleration.

Since the constraints above are in the form of bounds on acceleration, we need to determine the acceleration requirements imposed by the candidate velocity profile (\ref{eq:vdes}). The relation between velocity and acceleration is easily determined via the chain rule as
$$\frac{\partial v}{\partial t} = \frac{\partial v}{\partial s} \frac{\partial s}{\partial t} = v \frac{\partial v}{\partial s}.$$
The equation above holds for both odd and even vehicles.  Thus, we can find $a_{odd}$ and $a_{even}$ by substituting for odd and even vehicles using (\ref{eq:v-t}) and (\ref{eq:vdes}),  respectively.  Now that we have a well defined set of constraints, we solve the optimization problem numerically. 

Once we determine $\gamma$, we immediately have the time-gap profile $\tau_{i,des}(s).$  We then use (\ref{eq:v-t}) to yield the odd vehicle velocity profile $v_{odd}(s)$. Finally, we find $v_{des}(s) = v_{even}(s)$ using (\ref{eq:vdes}). %In next section, we will evaluate the controller by derived profiles.

\begin{figure*}[ht]
\centering
\begin{minipage}{.3\textwidth}
\centering
\includegraphics[width=1\columnwidth]{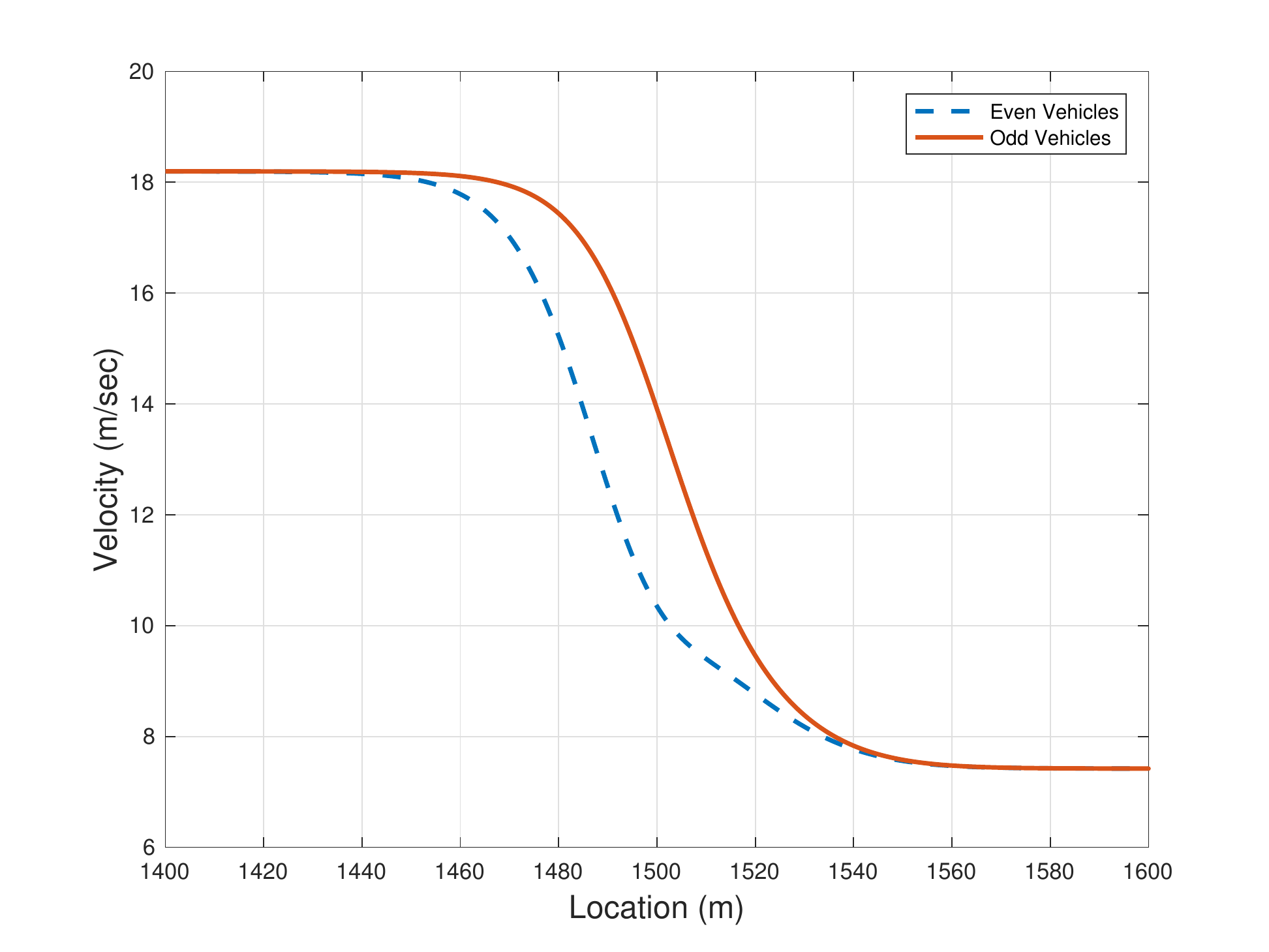}
\caption{Velocity of Two Successive Vehicles. The pattern is periodic.}
\label{fig:vel}
\end{minipage}\hfill
\begin{minipage}{.3\textwidth}
\centering
\includegraphics[width=1\columnwidth]{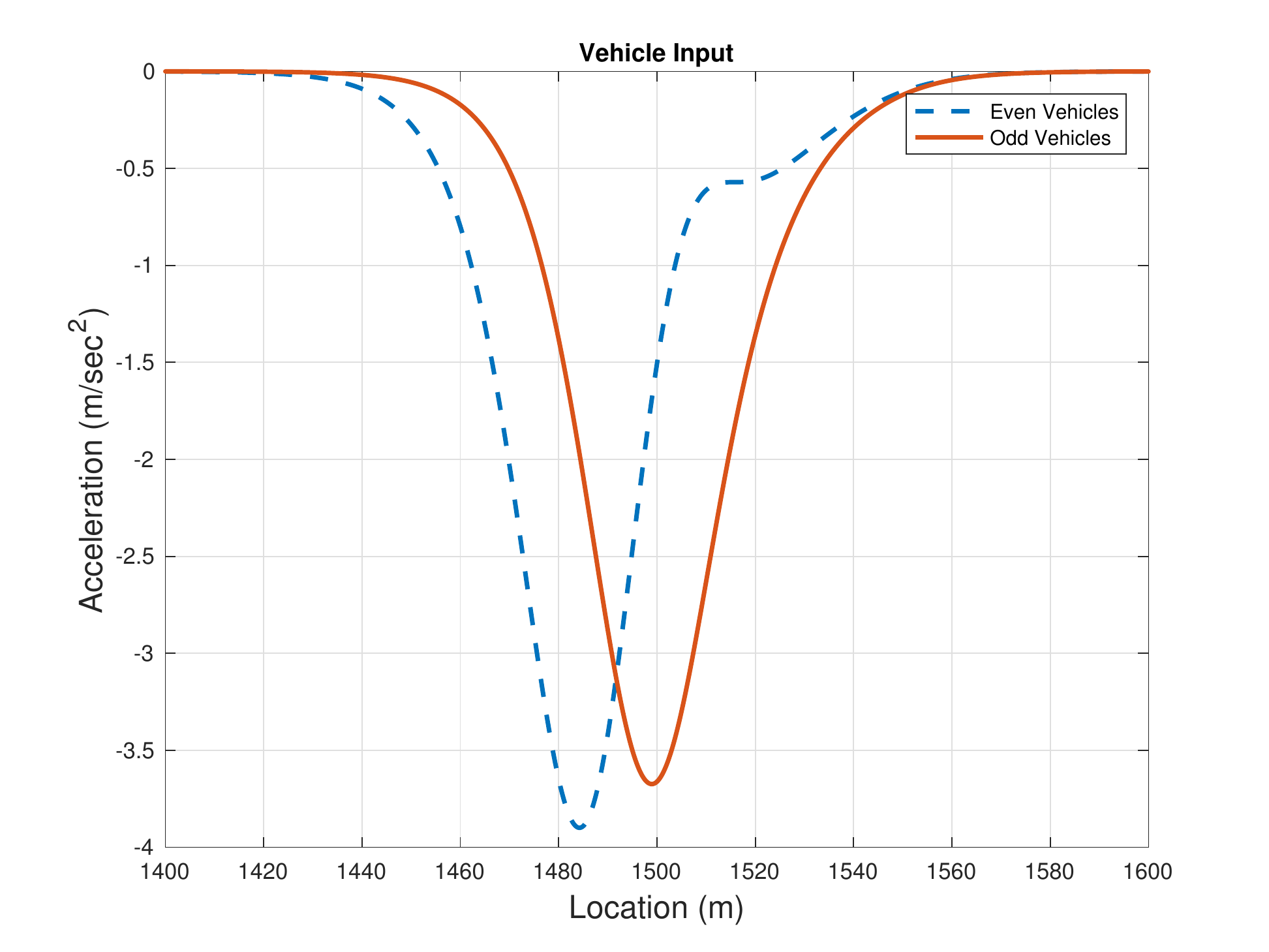}
\caption{Acceleration of Odd and Even Vehicles }
\label{fig:accel}
\end{minipage}\hfill
\begin{minipage}{.3\textwidth}
\centering
\includegraphics[width=1\columnwidth]{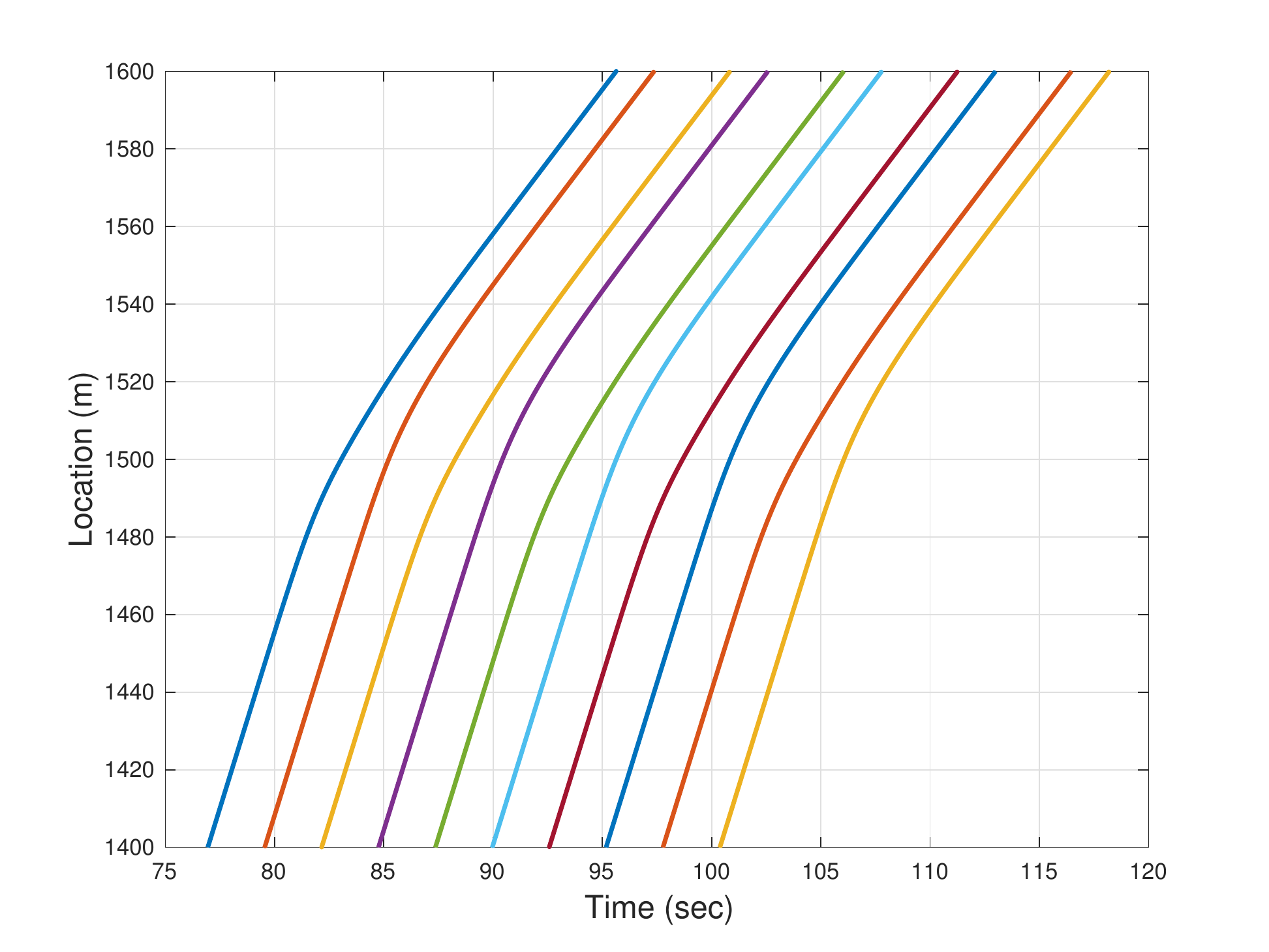}
\caption{Trajectories of all Vehicles in the Platoon}
\label{fig:tra}
\end{minipage}
\end{figure*}

\section{Evaluation}
\label{sec:eval}

We now perform numerical simulations to evaluate the performance of the proposed controller according to designed profiles.  We use the example of traffic shaping of a platoon of vehicles in the manner described in Section~\ref{sec:intro}.   We recall that the objective in that example was to reduce the operating velocity of the mainstream vehicles, while adjusting their time-gaps appropriately to allow for the merger with another substream platoon.  Thus, we desire to shape the mainstream platoon into sub-platoons with a smaller time-gap between members of each sub-platoon, and larger time-gaps between each sub-platoon.

%\subsection*{Profiles}
%In our simulation, we shape a mainstream platoon on main road to create gap for entering vehicles as in Figure \ref{fig:mot2}.  
We choose $\tau_0 = 2.6 sec$, which is $1.5$ time of minimum of time-gap in Figure \ref{fig:tv}.  $\tau_{odd,end} = 1.74 sec$ is the minimum time-gap supported by merged platoon on the highway.  Under these targets, $\alpha$ and $\beta$ in (\ref{eq:alpha}) would be $2.17$ and $0.43,$ respectively.   Finally, to determine parameter $\gamma$, we solve the optimization problem of (\ref{eq:opt}) numerically.  The optimal value of $\gamma$ is $0.057$.  Figure \ref{fig:time-prof} illustrates the time-gap of each vehicle at each location ($\tau_{i,des}(s)$), while Figure \ref{fig:vel-prof} shows the velocity profile ($v_{des}(s)$).

%\subsection*{Evaluation Results}
Figure~\ref{fig:tg} illustrates the time-gap tracking performance of the controller as a function of its location.   
%The x-axis represents the vehicle location and y-axis represents the time-gap between two vehicles measured in second. 
There are two plots shown in this figure.  On the one hand, the decreasing plot is associated with odd vehicles, showing that they reduce the time-gaps and approach closer to the vehicle in front.  On the other hand, the increasing time-gap plot shows the time-gap of even vehicles, and indicates the creation of sub-platoons.  Comparison of the realized time-gaps of vehicles in Figure~\ref{fig:tg} with the desired time-gap profiles in Figure \ref{fig:time-prof} indicates precise tracking is achieved.% of the desired time-gap profile.

Figure~\ref{fig:vel} shows the realized velocity profiles of odd and even vehicles.vehicles along the road.  The figure indicates tha even vehicles are track the desired velocity profile shown in Figure~\ref{fig:vel-prof} accurately.  However, odd vehicles show some amount of error while decelerating.  This observation is consistent with (\ref{eq:ei}), which indicates that odd vehicles have an error of $\frac{\partial \tau}{\partial s}$ in tracking the desired velocity profile.  All vehicles eventually converge to an identical velocity downstream. 

The acceleration of all vehicles are in the desired bounded range at all locations as seen in Figure \ref{fig:accel}.  The acceleration of odd and even vehicles follow different profiles, consistent with their tracking different time-gap profiles as well as experiencing errors in velocity tracking in the case of odd vehicles.

Figure \ref{fig:tra} shows the trajectories of vehicles as a function of time.  As the figure indicates, the uniform spacing across vehicles in the platoon (bottom of figure) gradually gives way to sub-platoons of size two (top of figure).  We see that the density of vehicles at a sub-platoon level (as indicated by their spacing) increases, while the velocity of each vehicle decreases simultaneously.

%As it is illustrated in Figure \ref{fig:tg}, odd and even vehicles are tracking the decreasing and increasing time-gap profiles respectively at each location. On the other side, Figure \ref{fig:vel} shows velocity tracking. As it is clear from the figure, even and odd vehicles are tracking two different velocity profiles, which is consistent with state-space representation (\ref{eq:dynam}). According to (\ref{eq:ei}), odd vehicles have error of amount $\frac{\partial \tau}{\partial s}$ in tracking velocity profile. Furthermore, Figure \ref{fig:accel} shows that acceleration of all vehicles remains in boundaries.

%Beside showing precise tracking of profiles, Figure \ref{fig:safe} shows that all vehicles remain under the curve alongside of the road with this controller and time-gap and velocity profiles. According to this figure, odd vehicles move on the curve, since they are decreasing the time-gap with proceeding vehicle. On the other side, even vehicles are moving below the curve due to increasing time-gap profile. Moreover, the straight line shows that final velocity of both sub-platoons are identical. In consequence, movements of both odd and even vehicles imply safety as they remain on and under the curve respectively. Hence, performance of the controller lies in safety region.

Finally, Figure \ref{fig:safe} indicates that the traffic shaping operation occurs in the safe region operation, since the time-gap and velocity of all vehicles remain above the curve at all locations. %Thus, it is concluded that the procedure of designing the controller and profiles yields in safe region.

\begin{figure}[h]
\begin{center}
\includegraphics[width=0.4\textwidth]{./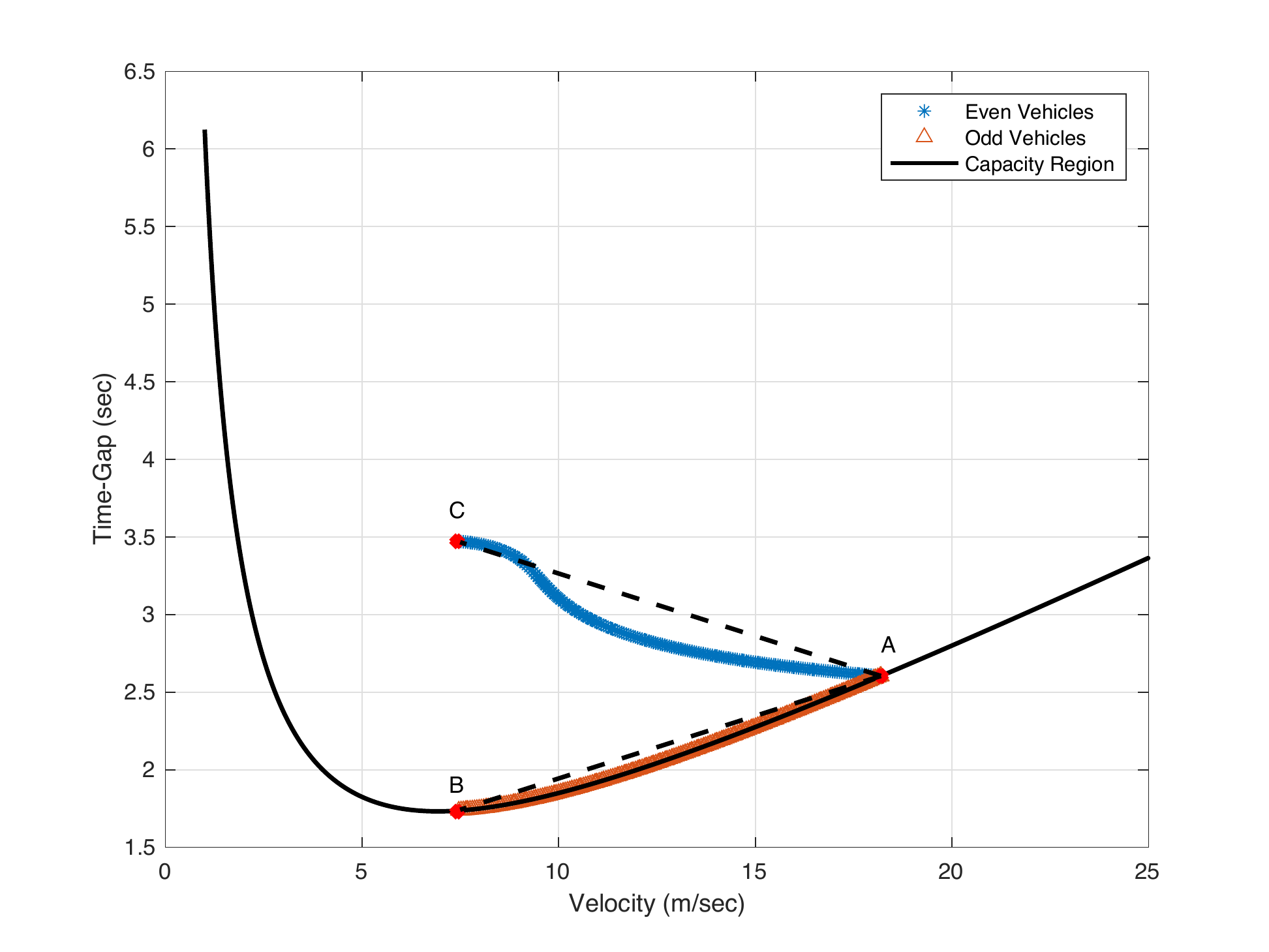}
\caption{Operation of controller in the safe region}
\label{fig:safe}
\end{center}
\end{figure}
\section{Discussion}
\label{sec:dis}
Merging two strings of vehicles is the motivation of this study as mentioned previously. There is an issue of providing gaps in between of vehicles and increasing the flow rate in order to prevent the shockwaves before merging. The proposed controller is capable of changing the distribution of vehicles in a string, which other controllers with constant time-gap or spacing policy are not. Thus, appending this controller with an appropriate merging policy could solve the automated merging problem.

\par An appropriate merging scenario with the controller could be locating two strings with a time-gap shift. Considering a road with two lanes merging, if the leading vehicles of two strings maintain a time-gap of $\tau_0$ and the strings change their distribution before merging, it is possible to locate the sub-platoons created to one platoon while merging. Therefore, after merging point there is a platoon of vehicles with higher flow rate and desired velocity. Besides, probable shockwaves are prevented, since all vehicles are tracking the velocity profile with respect to location. Consequently, this merging scenario shows the importance of the proposed controller.

\section{Conclusion}
\label{sec:conc}

In this work, we studied the problem of traffic shaping of a platoon of vehicles in terms of achieving a variable time-gap and velocity as a function of location.  The motivation for such shaping is to handle conditions of variable flow, such as two platoons merging due to a lane drop.   Existing controllers have focussed primarily on fixed spacing or time-gap regimes, which are not able to account for spatially changing flows.  The controller that we designed is able to locally modify a platoon by increasing or decreasing its flow locally via variable time-gaps in a provably safe manner.  In the example of platoon merger, it is thus able provide the necessary gaps to accomodate merging vehicles. 

Our approach involves defining system parameters over space, rather than time, and can be seen as an generalization using this approach for constant flow (fixed time-gap) problems.  Our methodological contribution was to identify the conditions needed on the time-gap and velocity profile to ensure stability, and to design such profiles keeping in mind the safe operating region of the system.   We showed that appropriately designed target profiles can be used to ensure that errors do not propagate unboundedly across vehicles, hence attaining a notion of string stability across the platoon.

%The system that we designed assumes perfect communication in a V2I manner, and challenging this assumption would require the introduction of random information loss or errors in the specified profiles.  Accounting for such errors in a robust manner will be our future work.

\bibliographystyle{IEEEtran}
\bibliography{refs}

% Generated by IEEEtran.bst, version: 1.13 (2008/09/30)
\begin{thebibliography}{10}
\providecommand{\url}[1]{#1}
\csname url@samestyle\endcsname
\providecommand{\newblock}{\relax}
\providecommand{\bibinfo}[2]{#2}
\providecommand{\BIBentrySTDinterwordspacing}{\spaceskip=0pt\relax}
\providecommand{\BIBentryALTinterwordstretchfactor}{4}
\providecommand{\BIBentryALTinterwordspacing}{\spaceskip=\fontdimen2\font plus
\BIBentryALTinterwordstretchfactor\fontdimen3\font minus
  \fontdimen4\font\relax}
\providecommand{\BIBforeignlanguage}[2]{{%
\expandafter\ifx\csname l@#1\endcsname\relax
\typeout{** WARNING: IEEEtran.bst: No hyphenation pattern has been}%
\typeout{** loaded for the language `#1'. Using the pattern for}%
\typeout{** the default language instead.}%
\else
\language=\csname l@#1\endcsname
\fi
#2}}
\providecommand{\BIBdecl}{\relax}
\BIBdecl

\bibitem{s1998}
D.~Swaroop and R.~Huandra, ``Intelligent cruise control system design based on
  a traffic flow specification,'' \emph{Vehicle system dynamics}, vol.~30,
  no.~5, pp. 319--344, 1998.

\bibitem{time1}
P.~Y. Li and A.~Shrivastava, ``Traffic flow stability induced by constant time
  headway policy for adaptive cruise control vehicles,'' \emph{Transportation
  Research Part C: Emerging Technologies}, vol.~10, no.~4, pp. 275--301, 2002.

\bibitem{time2}
D.~Swaroop and K.~Rajagopal, ``A review of constant time headway policy for
  automatic vehicle following,'' in \emph{Intelligent Transportation Systems,
  2001. Proceedings. 2001 IEEE}.\hskip 1em plus 0.5em minus 0.4em\relax IEEE,
  2001, pp. 65--69.

\bibitem{time3}
J.~Wang and R.~Rajamani, ``The impact of adaptive cruise control systems on
  highway safety and traffic flow,'' \emph{Proceedings of the Institution of
  Mechanical Engineers, Part D: Journal of Automobile Engineering}, vol. 218,
  no.~2, pp. 111--130, 2004.

\bibitem{comp}
D.~Swaroop, J.~Hedrick, C.~Chien, and P.~Ioannou, ``A comparision of spacing
  and headway control laws for automatically controlled vehicles,''
  \emph{Vehicle System Dynamics}, vol.~23, no.~1, pp. 597--625, 1994.

\bibitem{space}
J.~Hedrick, ``Constant spacing strategies for platooning in automated highway
  systems,'' 1999.

\bibitem{space2}
E.~Shaw and J.~K. Hedrick, ``String stability analysis for heterogeneous
  vehicle strings,'' in \emph{American Control Conference, 2007. ACC'07}.\hskip
  1em plus 0.5em minus 0.4em\relax IEEE, 2007, pp. 3118--3125.

\bibitem{distance1}
J.~Zhao, M.~Oya, and A.~El~Kamel, ``A safety spacing policy and its impact on
  highway traffic flow,'' in \emph{Intelligent Vehicles Symposium, 2009
  IEEE}.\hskip 1em plus 0.5em minus 0.4em\relax IEEE, 2009, pp. 960--965.

\bibitem{Bessel}
B.~Besselink and K.~Johansson, ``Control of platoons of heavy-duty vehicles
  using a delay-based spacing policy∗∗ this research is funded by the fp7
  programme of the european union through the project companion.''
  \emph{IFAC-PapersOnLine}, vol.~48, no.~12, pp. 364--369, 2015.

\bibitem{merge1}
M.~Suzuki, ``Efficient communication for platooning and overtaking in
  intelligent transport systems,'' 2013.

\bibitem{merge2}
H.~H. Bengtsson, L.~Chen, A.~Voronov, and C.~Englund, ``Interaction protocol
  for highway platoon merge,'' in \emph{Intelligent Transportation Systems
  (ITSC), 2015 IEEE 18th International Conference on}.\hskip 1em plus 0.5em
  minus 0.4em\relax IEEE, 2015, pp. 1971--1976.

\bibitem{khalil}
H.~K. Khalil, \emph{Noninear Systems}.\hskip 1em plus 0.5em minus 0.4em\relax
  Prentice-Hall, New Jersey, 1996.

\bibitem{model1}
S.~S. Stankovic, M.~J. Stanojevic, and D.~D. Siljak, ``Decentralized
  overlapping control of a platoon of vehicles,'' \emph{IEEE Transactions on
  Control Systems Technology}, vol.~8, no.~5, pp. 816--832, 2000.

\bibitem{model2}
J.~Ploeg, N.~Van De~Wouw, and H.~Nijmeijer, ``Lp string stability of cascaded
  systems: Application to vehicle platooning,'' \emph{IEEE Transactions on
  Control Systems Technology}, vol.~22, no.~2, pp. 786--793, 2014.

\bibitem{swaroop}
D.~Swaroop and J.~Hedrick, ``String stability of interconnected systems,''
  \emph{IEEE transactions on automatic control}, vol.~41, no.~3, pp. 349--357,
  1996.

\end{thebibliography}

\end{document}